\newcommand{\extended}[1]{}         
\theoremstyle{definition}
\newtheorem{definition}{Definition}
\theoremstyle{plain}
\theoremstyle{remark}
\newcommand{\halfline}{\vspace{.5\baselineskip}}
\newcommand{\head}[1]{\noindent \textbf{#1}}
\renewcommand{\dots}{...}
\newcommand{\ie}{i.e.}
\newcommand{\eg}{e.g.}
\newcommand{\colonequals}{:=}
\newcommand{\last}{\mbox{last}}
\newcommand{\APf}{\textnormal{AP}}
\renewcommand{\phi}{\varphi}
\newcommand{\prop}[1]{\ensuremath{\mathsf{{#1}}}}
\newcommand{\until}{{\bf U}}
\newcommand{\Until}{\until}
\newcommand{\U}{\until}
\newcommand{\X}{{\bf X}}
\newcommand{\G}{{\bf G}}
\newcommand{\F}{{\bf F}}
\newcommand{\profile}[1]{\boldsymbol{#1}}
\newcommand{\Ag}{\textnormal{Ag}}
\newcommand{\ag}{a}
\newcommand{\agb}{b}
\newcommand{\Act}{\textnormal{Ac}}
\newcommand{\act}{c}
\newcommand{\mov}{\profile{\act}}
\newcommand{\mova}[1][\ag]{\mov_{#1}}
\newcommand{\setpos}{St}
\newcommand{\States}{\setpos}
\newcommand{\pos}{s}
\newcommand{\state}{\pos}
\newcommand{\CGS}{\relax\ifmmode\mathcal G\else$\textrm{CGS}$\xspace\fi}
\newcommand{\System}{\mathcal G}
\newcommand{\trans}{\delta}
\newcommand{\legal}{\textnormal{L}}
\newcommand{\val}{\ell}
\newcommand{\iplay}{\pi}
\newcommand{\strat}{\sigma}
\newcommand{\setstrat}{\mbox{\emph{Str}}}
\newcommand{\setstrata}[1][\ag]{\mbox{\emph{Str}}^r_{#1}}
\newcommand{\obsrel}[1][\ag]{\sim_{#1}}
\newcommand{\history}{h}
\newcommand{\History}{\text{Hist}}
\newcommand{\Reals}{\mathbb{R}}
\newcommand{\wCGS}{\relax\ifmmode \mathcal G\else$\textrm{wCGS}$\xspace\fi}
\newcommand{\lan}[1]{\ensuremath{\mathsf{#1}}\xspace}
\newcommand{\ATL}[1][]{\lan{ATL_{\stratstyle{#1}}}}
\newcommand{\ATLs}[1][]{\lan{ATL_\stratstyle{#1}^*}}
\newcommand{\PATL}[1][]{\lan{PATL_{\stratstyle{#1}}}}
\newcommand{\PATLs}[1][]{\lan{PATL_\stratstyle{#1}^*}}
\newcommand{\SL}[1][]{\lan{SL_{\stratstyle{#1}}}}
\newcommand{\PSL}[1][]{\lan{PSL_{\stratstyle{#1}}}}
\newcommand{\stratstyle}[1]{\ensuremath{\mathrm{#1}}}
\newcommand{\ir}{\stratstyle{ir}\xspace}
\newcommand{\Dist}{\text{Dist}}
\newcommand{\distribution}{\mathsf{d}}
\newcommand{\coop}[1]{\langle\!\langle{#1}\rangle\!\rangle}
\newcommand{\noavoid}[1]{[\![{#1}]\!]}
\newcommand{\coalition}{C}
\newcommand{\agent}{a}
\newcommand{\complexityclass}[1]{\ensuremath{\mathbf{{#1}}}\xspace}
\newcommand{\Ptime}{\complexityclass{P}}
\newcommand{\PTIME}{\complexityclass{P}}
\newcommand{\NP}{\complexityclass{NP}}
\newcommand{\coNP}{\complexityclass{co}\text{-}\complexityclass{NP}}
\newcommand{\Pspace}{\complexityclass{PSPACE}}
\newcommand{\NPspace}{\complexityclass{NPSPACE}}
\newcommand{\PSPACE}{\Pspace}
\newcommand{\NPSPACE}{\NPspace}
\newcommand{\Exptime}{\complexityclass{EXPTIME}}
\newcommand{\EXPTIME}{\Exptime}
\newcommand{\Deltacomplx}[1]{\complexityclass{\Delta_{{#1}}^{\Ptime}}}
\newcommand{\Deltwo}{\Deltacomplx{2}}
\newif\ifdraft\drafttrue 
\newcommand{\munyque}[1]{{\small\color{RoyalBlue}[MM:  #1]}}
\newcommand{\wojtek}[1]{{\small\color{ForestGreen}[WJ: #1]}}
\newcommand{\nello}[1]{{\small\color{Orange}[NM:  #1]}}
\newcommand{\munyque}[1]{}
\newcommand{\nello}[1]{}
\newcommand{\wojtek}[1]{}
\newcommand{\WJ}[1]{{\small\color{ForestGreen}[WJ: #1]}}
\newtheorem{theorem}{Theorem}
\title{Strategic Abilities of Forgetful Agents in Stochastic Environments}
\author{Francesco Belardinelli$^{1}$\and
Wojciech Jamroga$^{2,3}$\and
Munyque Mittelmann$^4$\and
Aniello Murano$^4$ \\
\affiliations
$^1$Imperial College, London, UK\\
$^2$Interdisc. Centre on Security, Reliability and Trust, SnT, University of Luxembourg\\
$^3$Institute of Computer Science, Polish Academy of Sciences, Warsaw, Poland\\
$^4$University of Naples ``Federico II''\\
\emails
francesco.belardinelli@imperial.ac.uk,
wojciech.jamroga@uni.lu,
\emails\{munyque.mittelmann, aniello.murano\}@unina.it
}
\begin{document}

\maketitle

\begin{abstract}
In this paper, we investigate the probabilistic variants 
of the strategy logics \ATL and \ATLs under imperfect information. Specifically, we present novel 
decidability and complexity results when the model transitions are stochastic and agents play uniform strategies. 
That is, the semantics 
of the logics are based on multi-agent, stochastic transition systems 
with imperfect information, which combine two sources of uncertainty, 
namely, the partial observability agents have on the environment, and 
the likelihood of transitions to occur from a system state. Since the 
model checking problem is undecidable in general in this setting, we 
restrict our attention to agents with memoryless (positional) 
strategies. The resulting setting captures the situation in which agents 
have qualitative uncertainty of the local state and quantitative 
uncertainty about the occurrence of future events. We illustrate the 
usefulness of this setting with meaningful examples.
\end{abstract}

\section{Introduction}

Complex and interacting Multi-Agent Systems (MAS) often face different kinds of uncertainty. One of the sources of uncertainty is the inability to completely observe the current local situation  (\eg, whether there is public transport available to the target destination). On the other hand, the occurrence of many natural events and the future behaviour of other agents, while it cannot be known with certainty, can be measured based on experiments or past observations. For instance, while we cannot know if the bus is going to arrive on time, we may have observed that this happens $0.7\%$ of the time.  Clearly, intelligent autonomous agents need to consider both the imperfect information about the local state and the likelihood of stochastic events when making strategic decisions and plans.

To see this, consider, for instance, the problem of online mechanisms, 
  which are  preference aggregation games in dynamic environments with multiple
agents and private information. 
Many multi-agent problems are inherently dynamic 
rather than static. Practical  examples include the problem of allocating computational resources (bandwidth, CPU, etc.) to processes arriving over time, selling items to a possibly changing group of buyers with uncertainty about the future supply, and selecting employees from a dynamically changing list of candidates \citep{Nisan2007}. 

\emph{Probabilistic model-checking}
is a technique for the formal and automated analysis of probabilistic systems that can be modeled by stochastic state-transition models \citep{clarke2018handbook}. Its  
aim is to establish the correctness of such systems  against  probabilistic specifications, which may describe, \eg, the probability of an unsafe event to occur, or the ability of a coalition to ensure the completeness of a task. 

Logic-based approaches have been widely and successfully applied for probabilistic verification of MAS. For instance, probabilistic model-checking techniques have been used for verification of preference aggregation mechanisms \citep{mittelmann2023bayesian}, negotiation games \citep{Ballarini2009}, team formation protocols \citep{chen2011verifying}, and stochastic behaviors in dispersion games \citep{hao2012probabilistic}, to name a few. 
\cite{KR2021-30} investigates the problem of deciding whether the probability of satisfying a given temporal formula in a concurrent stochastic game is 1 or greater than 0. 
 \cite{Kwiatkowska2022} details how verification techniques can be developed and implemented for concurrent stochastic games. 

In this paper, we consider logics for reasoning about strategic abilities  while  taking into account both incomplete information and probabilistic behaviors of the environment and agents. 
We study  the  Probabilistic Alternating-time Temporal Logics \PATL and \PATLs \citep{chen2007probabilistic,hao2012probabilistic} under imperfect information (II) for a classic type of agents \citep{fagin2004reasoning} called  imperfect-recall (that is, agents who use memoryless strategies, also called Markovian strategies or policies).
Model checking \PATLs under II for agents with perfect-recall (who uses memoryful strategies) is known to be undecidable in general  even for the fragment with a single-player \citep{hao2012probabilistic}.  
We introduce and motivate the problem of strategic reasoning under combined types of uncertainty and memoryless agents. We then provide results on the model-checking complexity for \PATL with memoryless deterministic strategies for the coalition and point directions to challenging open questions. 

\smallskip
\noindent\textbf{{Related Work.}}
Recently, much work has been done on logics for strategic reasoning in Multi-Agent Systems, starting from the pioneering work on Alternating-time Temporal Logics
\ATL and \ATLs \citep{AlurHK02}.
These logics enable reasoning about the strategic abilities of agents in a cooperative or competitive system.   
\ATL\ has been extended in various directions, considering for instance strategy contexts~\citep{DBLP:journals/iandc/LaroussinieM15} or adding imperfect information ~\citep{jamroga2011comparing}. Strategy Logic (\SL) \citep{ChatterjeeHP10,MMPV14} extends \ATL to treat  strategies as first-order variables. 

Contexts of imperfect information have been extensively considered in the literature on formal verification (see, for instance, \citep{Dima11undecidable,KV00,JA06,Reif84,BJ14,tocl-BMMRV21,journals-BLMR20,BD08}). Generally, imperfect information in MAS entails higher complexity, which may be even undecidable when considered in the context of memoryful strategies~\citep{Dima11undecidable}. In order to retrieve a decidable model-checking problem, it is interesting to study imperfect information MAS with memoryless agents~\citep{journals-CLMM18}. 

Several works consider the verification of systems against specifications given in probabilistic logics. In particular, 
\cite{wan2013model} study the model-checking problem for Probabilistic Epistemic Computational Tree Logic with semantics based on  probabilistic interpreted systems. 
In the context of MAS,  
\citep{huang2013logic} studies an ATL-like logic for stochastic MAS in a setting in which agents play deterministic strategies and have probabilistic knowledge about the system. 
 \citep{fu2018model} shows model-checking an epistemic logic with temporal operators under strategies that depend only on agents' observation history is undecidable.

\cite{chen2007probabilistic} 
propose model-checking algorithms for Probabilistic ATL in the perfect information setting. Perfect information was also considered with specification in Probabilistic Alternating-Time $\mu$-Calculus \citep{song2019probabilistic} and Probabilistic Strategy Logic \cite{aminof2019probabilistic}. 
ATL-based probabilistic logics were also considered for the verification of unbounded parameterized MAS \citep{lomuscio2020parameterised},  for resource-bounded MAS \citep{nguyen2019probabilistic}, and under assumptions over opponents' strategies \citep{DBLP:journals/fuin/BullingJ09}. 

The closest related work is \citep{Huang2012}, which considers the logic \PATLs under incomplete information and synchronous perfect recall. The complexity results show that the model-checking problem is in general undecidable even for the single-agent fragment of the logic. 

Also related are the works in \citep{gripon2009qualitative,doyen2011games,carayol2018pure,Doyen2022}, which consider algorithmic solutions for computing the existence of winning strategies and winning distributions for two-player stochastic games with imperfect information. Finally, \cite{gurov2022knowledge} investigate the problem of strategy synthesis for knowledge-based strategies against a non-deterministic environment.

 \section{Preliminaries}
\label{sec:preliminars}

In this paper, we fix finite non-empty sets of agents $\Ag$, actions $\Act$,
atomic propositions $\APf$. 
We  write $\profile{o}$ for a tuple of objects $(o_\ag)_{\ag\in\Ag}$, one for each agent, and such tuples are called \emph{profiles}. A \emph{joint action} or \emph{move} $\mov$ is an element of $\Act^{\Ag}$. 
Given a profile $\profile{o}$ and $\coalition\subseteq\Ag$, we let $o_\coalition$ be the components of agents in  $\coalition$, and $\profile{o}_{-\coalition}$ is $(o_\agb)_{\agb\not \in \coalition}$. Similarly, we let $\Ag_{-\coalition}=\Ag\setminus\coalition$. 

\halfline
\head{Distributions. } Let $X$ be a finite non-empty set. A \emph{(probability) distribution} over $X$ is a function $\distribution:X \to [0,1]$ such that $\sum_{x \in X} \distribution(x) = 1$, and $\Dist(X)$ is the set of distributions over $X$. We write $x \in \distribution$ for $\distribution(x) > 0$. 
If $\distribution(x) = 1$ for some element $x \in X$, then $\distribution$ is a \emph{point (a.k.a. Dirac) distribution}. 
If, for $i\in I$, $\distribution_i$ is a distribution over $X_i$, then, writing $X = \prod_{i\in I} X_i$, the \emph{product distribution} of the $\distribution_i$ is the distribution $\distribution:X \to [0,1]$ defined by $\distribution(x) = \prod_{i\in I} \distribution_i(x_i)$.

\halfline
\head{Markov Chains. }
  A \emph{Markov chain} $M$ is a tuple $(\setpos,p)$ where $\setpos$ is a set of states and $p \in \Dist(\setpos \times \setpos)$ is a distribution. The values $p(s,t)$ are called \emph{transition probabilities} of $M$.
  A \emph{path} is an infinite sequence of states.

\halfline
\head{Concurrent Game Structures. } 
A \emph{stochastic concurrent game structure with imperfect information} (or simply \emph{CGS})
  $\System$ is a tuple $(\setpos, \legal, \trans, \val, \{\obsrel\}_{\ag\in\Ag})$ where
(i) $\setpos$ is a finite non-empty set of \emph{states};
(ii) $\legal: \setpos \times \Ag \to 2^\Act\setminus\{\emptyset\}$ is a \emph{legality function} defining the available actions for each agent in each state, we write $\profile{\legal(\pos)}$ for the tuple $(\legal(\pos, \ag))_{\ag\in\Ag}$; (iii) for each state $\pos \in \setpos$ and each  move $\mov \in \profile{\legal(\pos)}$, the \emph{stochastic transition function} $\trans$ gives the (conditional) probability $\trans(\pos, \mov)$ of a transition from state $\pos$ for all $\pos' \in \setpos$ if each player $\ag \in \Ag$ plays the action $\mova$, we also write this probability as $\trans(\pos, \mov)(\pos')$, to emphasize that $\trans(\pos, \mov)$ is a probability distribution on $\setpos$;
(iv) $\val:\setpos \to 2^{\APf}$ is a \emph{labelling function};
(v)  
    $\obsrel\;\subseteq \setpos\times\setpos$ is an equivalence relation called the  \emph{observation relation} of agent $\ag$.

Throughout this paper, we assume that the CGS is uniform, that is, if two states are indistinguishable for an agent $\ag$, then $\ag$ has the same available actions in both states. Formally, if $\pos \obsrel \pos'$ then $\legal(\pos, \ag) = \legal(\pos', \ag)$, for any $\pos, \pos' \in \setpos$ and $\ag \in \Ag$. For each state $\pos \in \setpos$ and joint action $\mov \in \prod_{\ag \in \Ag} \legal(\pos,\ag)$, we also assume that there is a state $\pos'\in\setpos$ such that $\trans(\pos, \mov)(\pos')$ is non-zero, that is, every state has a successive state from a legal move.

We say that $\System$ is \emph{deterministic} (instead of stochastic) if every $\trans(\pos,\mov)$ is a point distribution. 

\halfline
\head{Plays. } 
A \emph{play} or path in a CGS $\System$ is an infinite sequence $\iplay=\pos_0 \pos_1 \cdots$ of states
such that there exists a sequence $\mov_0 \mov_1 \cdots$ of joint-actions such that $\mov_i \in \legal(\pos_{i})$ and  $\pos_{i+1} \in \trans(\pos_i,\mov_i)$ (\ie, $\trans(\pos_i,\mov_i)(\pos_{i+1} )>0$) for every $i \geq 0$.
We write $\iplay_i$ for $\pos_i$,
$\iplay_{\geq i}$ for the suffix of
$\iplay$ starting at position $i$. 
Finite paths are called \emph{histories}, and the set of all histories is denoted $\History$. Write $\last(\history)$ for the last state of a history $\history$.

\halfline
\head{Strategies. }
A (general) \emph{probabilistic strategy} is a  function $\strat:\History
\to  \Dist(\Act)$ that maps each history to a distribution of 
actions.  
We let $\setstrat$ be the set of all strategies. A \emph{memoryless uniform  probabilistic strategy} for an agent $\ag$
is a function $\sigma_\ag: \setpos \to \Dist(\Act)$ 
in which for all positions $\pos,\pos'$ such that $\pos\obsrel\pos'$, we have $\strat(\pos)=\strat(\pos')$. We let $\setstrata$ be the set of uniform strategies for agent $\ag$.
A deterministic (or \emph{pure}) strategy $\strat$ is a strategy in which  $\strat(\pos)$ is a point distribution for any $\pos$. 
A \emph{strategy profile} is a tuple $\profile\strat$  of strategies, one for each agent. We write $\strat_\ag$ for the strategy of  $\ag$ in the strategy profile $\profile\strat$. 
For a  strategy $\strat_\ag$ for  agent $\ag$, we assume that $\profile\strat(\history)(\act) = 0$ if $\act \not \in \legal(\last(\history),\ag)$. 
 \section{Probabilistic \ATL and \ATLs}
We begin by introducing the Probabilistic Alternating-Time Temporal Logics \PATLs and \PATL. 

The syntax of \PATLs  is defined by the grammar
\begin{align*}
	\varphi  ::= p \mid  {\varphi \lor  \varphi} \mid \neg \varphi \mid \X \varphi \mid \varphi \until \varphi \mid \coop{\coalition}^{\bowtie d} \varphi
\end{align*}
where $p \in \APf$, $\coalition \subseteq \Ag$, $d$ is a rational constant in
$[0, 1]$, and $\bowtie \in 
\{\leq, <, >, \geq\}$. 

The intuitive reading of the operators is as follows: $\coop{\coalition}^{\bowtie d}\varphi$ means that there exists a strategy for the coalition $\coalition$ to collaboratively enforce $\varphi$  with a probability in relation $\bowtie$ with constant $d$,
``next'' $\X$ and ``until'' $\U$ are the standard temporal operators.   
We  make use of the usual syntactic sugar ${\F \varphi \colonequals \top \U \varphi}$ and ${\G \varphi \colonequals \neg \F \neg \varphi}$ for temporal operators.
Finally, we use ${\noavoid{\coalition}^{\bowtie d}\varphi\colonequals \neg \coop{\coalition}^{\bowtie d} \neg \varphi}$ to express that no strategy of $\coalition$ can prevent $\varphi$  with a probability in relation $\bowtie$ with constant $d$.

An \PATLs formula of the form $\coop{\coalition}^{\bowtie d} \varphi$ or $\noavoid{\coalition}^{\bowtie d} \varphi$ is also called state formula. An important syntactic restriction of \PATLs, namely  \PATL, is defined as follows.

The syntax of \PATL  is defined by the grammar
	\begin{align*}   	
		\varphi ::= p \mid  \varphi \lor \varphi \mid \neg \varphi \mid \coop{\coalition}^{\bowtie d} \X \varphi \mid \coop{\coalition}^{\bowtie d}(\varphi \until\varphi)	
  \end{align*}
where $p \in \APf$, $\coalition \subseteq \Ag$, and $\bowtie \in \{\leq, <, >, \geq\}$.

Formulas of \PATL and \PATLs are interpreted over CGSs. 

 \halfline \head{Probability Space on Outcomes. } An \emph{outcome} of a
 strategy profile $\profile\strat$ and a state $\pos$ is a play $\iplay$ that starts with
 $\pos$ and is extended by $\profile\strat$, i.e., $\iplay_{0} =
 \pos$, and for every $k \geq 0$ there exists $\mov_k \in
 \profile\strat(\iplay_k)$ such that $\iplay_{k+1} \in
 \trans(\iplay_k,\mov_k)$. 
 The set of outcomes of a strategy profile  $\profile\strat$ and state $\pos$  is denoted $Out(\profile\strat,\pos)$.   
 A given system
 $\System$, strategy profile $\profile{\strat}$, and state
 $\pos$ induce an infinite-state Markov chain
 $M_{\profile{\strat},\pos}$ whose states are the finite prefixes of
 plays in
 $Out(\profile{\strat},\pos)$. Such finite prefixes of plays
   are called \emph{histories} and written
   $\history$, and we let
   $\last(\history)$ denote the last state in $\history$.  Transition probabilities
in  $M_{\profile{\strat},\pos}$ are defined as  $p(\history,\history\pos')=\sum_{\mov\in\Act^\Ag}
 \profile{\strat}(\history)(\mov) \times
 \trans(\last(\history),\mov)(\pos')$.  
 The Markov chain
 $M_{\profile{\strat},\pos}$ induces a canonical probability space on
 its set of infinite paths~\citep{kemeny1976stochastic}, which can be identified with the set of plays in  $Out(\profile{\strat},\pos)$ and the corresponding measure is denoted $out(\profile{\strat},\pos)$.
~\footnote{This is a classic construction, see for instance
 ~\citep{clarke2018model,berthon2020alternating}.}

  Given a coalition strategy $\profile{\strat_\coalition} \in \prod_{\ag \in \coalition} \setstrata$, we let $n = |\Ag\setminus\{\coalition\}|$ and define
the set of possible outcomes of $\profile{\strat_\coalition}$ from a state $\pos \in \setpos$ to be the set  $out_\coalition(\profile{\strat_\coalition},\pos) = \{out((\profile{\strat_\coalition},\profile{\strat_{-\coalition}}),\pos) : \profile{\strat_{-\coalition}} \in \setstrat^{n} \}$ of probability
measures that the players in $\coalition$ enforce when they
follow the strategy $\profile{\strat_\coalition}$, namely, for each $\ag \in \Ag$,
player $\ag$ follows strategy $\strat_\ag$. We use $\mu^{\profile{\strat_\coalition}}_\pos$ to range over $out_\coalition(\profile{\strat_\coalition},\pos)$.

\halfline
\subsubsection{\PATL and \PATLs Semantics}
\PATL and \PATLs formulas are interpreted in a transition system $\System$  
and a path  $\iplay$,
\begingroup
\allowdisplaybreaks
\begin{align*}
 \System,\iplay &\models p & \text{ iff } & p \in \val(\iplay_0)\\
 \System,\iplay &\models \neg \varphi & \text{ iff } & \System,\iplay \not \models \varphi \\
 \System,\iplay &\models \varphi_1 \lor \varphi_2 & \text{ iff }&  
 \System,\iplay \models \varphi_1  \text{ or } \System,\iplay \models \varphi_2
 \\
\System, \iplay&\models \coop{\coalition}^{\bowtie d} \varphi & \text{ iff }  & \exists \profile{\strat_{\coalition}} \in \prod_{\ag \in\coalition} \setstrata  \text{ such that }  \\ & & & \forall \mu^{\profile{\strat_\coalition}}_{\iplay_0} \in out_\coalition(\profile{\strat_{\coalition}},\iplay_0)
\text{, } 
\\ & & &
\mu^{\profile{\strat_\coalition}}_{\iplay_0}(\{\iplay' : \System,\iplay' \models \varphi\}) \bowtie d \\
 \System,\iplay &\models \X \varphi & \text{ iff } & \System,\iplay_{\geq 1} \models \varphi \\
\System, \iplay  & \models \psi_1 \until \psi_2 & \text{ iff } &  \exists k \geq 0 \text{ s.t. } \System,\iplay_{\geq k} \models \psi_2 \text{ and } 
\\ & & &
\forall j \in [i,k).\, \,  \System,\iplay_{\geq j}\models \psi_1
\end{align*} 
\endgroup

\section{Strategic Reasoning under Uncertainty}
\label{sec:examples}

Many real-life scenarios require agents to interact in partially observable environments with stochastic phenomena. 
A natural application of strategic reasoning over both of these sources of uncertainty is card games, as the distribution of cards is a stochastic event and the hand of each agent is kept secret from the other players. 

Let us see a more detailed example based on online mechanism design\footnote{Previous work \citep{SLKF_KR21,DBLP:conf/ijcai/MittelmannMMP22,mittelmann2023bayesian} have shown how to encode notions from Mechanism Design (e.g., strategyprofness) using logics for strategic reasoning.} and, in particular, elections. 
While the majority of elections have a static set of candidates which is known upfront, there are contexts where
candidates appear over time. A classic example is hiring a committee: the candidates that will appear the next day to pass an interview are unknown, 
and the voters must decide immediately
whether to hire one of the current candidates or not \citep{do2022online}.  

In  online approval-based election \citep{do2022online}, there is a non-empty set of candidates $C = \{1, \dots, m\}$ and the goal is to select $k\leq 1$ candidates for a committee. 
In each state, an unseen candidate $j$ is presented and the  agents vote on whether to include the current candidate in the
 committee or not. The election  continue until the committee is completed or all candidates have been rejected. 
 For a candidate $j$, we let the propositions  $rejected_j$, $selected_j$, $interview_j$, denote whether candidate $j$ was already rejected,   whether she was selected to the committee, and whether she is been currently interviewed, resp. For each agent $\ag$, 
$likes_{\ag,j}$
 denotes whether  $\ag$ is currently willing to approve the  candidate $j$.

Agents know their own preferences,  that is, the candidates they like but are uncertain about others' preferences. 
Voters can distinguish the candidate currently interviewed, but are unaware of the next candidate to be presented (\ie, whether $\X interview_j$ holds in any given state).

In each state $\pos$, agents can either accept or reject the current candidate (actions $y$ and $n$, resp.).  The probability of selecting candidate $j$ being selected is determined by the transition function $\trans(\pos, \mov)$, according to the actions in $\mov$. 
If all agents accept (similarly, reject) a candidate, the system transitions to a state in which the candidate is selected (resp. rejected) with a probability equal to one. If there is no consensus on whether to accept the candidate, the probability to transition to a state in which the candidate is selected is given by a rational constant $p_{j, \mov}\in (0,1)$. Similarly, the probability of moving to a state where she is rejected is $1-p_{j, \mov}$.

The \PATL formula 
$$rejected_j \to  \neg \coop{\coalition}^{\geq 1} \F selected_j$$

represents that the coalition $\coalition$ cannot select a candidate that was already rejected.

The \PATLs formula 
$$\coop{\coalition}^{\geq \frac{1}{2}} \bigwedge_{\ag \in \coalition} \bigvee_{j \in C} likes_{\ag, j} \land \F selected_j$$
represents that the coalition $\coalition$ can ensure, with probability greater or equal to $\frac{1}{2}$ to select in the future at least one candidate liked by each agent in $\ag$, while 
$$\coop{\coalition}^{\geq \frac{1}{2}} \bigwedge_{\ag \in \coalition} \bigwedge_{j \in C} likes_{\ag, j} \land \F selected_j$$
states that they can ensure, with probability greater or equal to $\frac{1}{2}$, all their liked candidates are eventually selected.

The formula 
$$interview_j \to \coop{\coalition}^{\leq \frac{1}{4}}  \X selected_j$$
says that the probability the coalition $\coalition$ ensures the currently interviewed candidate is selected in the next state is at most~$\frac{1}{4}$.

\section{Model Checking Complexity}\label{sec:mc}

In this section, we look at the complexity of model-checking for \PATL. 
In particular, we show that the problem for \emph{memoryless deterministic strategies of the coalition} against probabilistic play of the other agents and a stochastic environment is no more complex than in standard (non-probabilistic) case.
The settings introduced in this paper include both deterministic and probabilistic memoryless strategies for the coalition and deterministic and stochastic CGSs.
This gives 4 semantic variants in total, but the case of deterministic strategies and deterministic CGSs consists of the standard setting for \ATL, whose complexity results are well-established.

The main technical result of this paper is as follows.

\begin{theorem}\label{prop:mcheck-det-patl}
Model checking \PATL[\ir]\footnote{As usual in the verification process, we denote no recall with r and imperfect information with i. }  with deterministic strategies for the coalition is \Deltwo-complete.
\end{theorem}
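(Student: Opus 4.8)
The plan is to prove both membership in $\Deltwo$ and $\Deltwo$-hardness. The conceptual core, from which the upper bound and the slogan ``no harder than the non-probabilistic case'' both follow, is that once the coalition commits to a memoryless deterministic uniform strategy, the residual interaction with the (arbitrary, possibly randomised) opponents together with the stochastic transition function $\trans$ collapses into a finite Markov decision process on which the relevant objectives are solvable in polynomial time. For membership I would model-check bottom-up, computing for every subformula $\psi$ the set $\llbracket\psi\rrbracket\subseteq\setpos$ of states at which it holds, exactly as in classical \ATL/\CTL labelling. The only nonstandard case is a strategic subformula $\coop{\coalition}^{\bowtie d}\chi$ with $\chi\in\{\X\varphi,\,\varphi_1\until\varphi_2\}$ whose inner state formulas are already labelled. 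To decide it at a state $\pos$, I guess a memoryless deterministic uniform coalition strategy $\profile{\strat_\coalition}$ (a polynomial-size object: a map from observation classes to legal joint actions of $\coalition$) and then verify the probability constraint. Since the guess is polynomial and the verification will be polynomial, each such test is in $\NP$; running the bottom-up labelling with this $\NP$ test as an oracle over the polynomially many $(\text{subformula},\text{state})$ pairs gives an overall $\PTIME^{\NP}=\Deltwo$ procedure, with negations handled by complementing oracle answers.

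The verification step is where all the probabilistic machinery is confined. Fixing $\profile{\strat_\coalition}$ in $\System$ removes the coalition's choices, and treating the opponents as one adversary with action set $\prod_{\agb\notin\coalition}\legal(\pos,\agb)$ at each $\pos$ turns the residual structure into an MDP $M^{\profile{\strat_\coalition}}$ over $\setpos$. Here $\X\varphi$ is a one-step reachability and $\varphi_1\until\varphi_2$ is constrained reachability of the already-labelled $\varphi_2$-states through $\varphi_1$-states. For such reachability objectives the optimal value over all opponent strategies is attained by a memoryless deterministic adversary and is computable in polynomial time by linear programming; hence the opponents' independence and their freedom to use memory or randomisation do not change the optimum, because pure product strategies already realise every joint pure action. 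Depending on $\bowtie$, I compare $d$ with the minimal opponent-value (for $\geq,>$, where the opponents minimise) or the maximal opponent-value (for $\leq,<$, where they maximise); these optimal values are rational and exactly computable, so the comparison, including the strict cases, is exact. This places the verification, and thus the whole labelling, in $\Deltwo$.

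For hardness I would reduce from model checking $\ATL[\ir]$, which is known to be $\Deltwo$-complete. A deterministic CGS is exactly the special case in which every $\trans(\pos,\mov)$ is a point distribution, and over such models $\coop{\coalition}^{\geq 1}\chi$ expresses precisely that the coalition has a memoryless uniform strategy enforcing $\chi$ against all opponents: against a deterministic arena the attained probability is $0$ or $1$, and $\Pr(\chi)=1$ under all (even randomised) opponent strategies iff no pure opponent strategy falsifies $\chi$, which is the \ATL[\ir] outcome condition. Rewriting each \ATL[\ir] modality $\coop{\coalition}$ as $\coop{\coalition}^{\geq 1}$ is therefore a polynomial-time, truth-preserving embedding that transfers the $\Deltwo$ lower bound to our problem.

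The step I expect to require the most care is the verification lemma: arguing rigorously that allowing the opponents arbitrary memoryful, randomised, and mutually independent strategies never exceeds the value of a single memoryless deterministic adversary on $M^{\profile{\strat_\coalition}}$, so that the universal quantification over opponent strategies is discharged in polynomial time rather than contributing a further quantifier alternation. This is exactly what keeps each strategic test inside $\NP$ and ultimately pins the complexity at $\Deltwo$ instead of higher in the polynomial hierarchy.
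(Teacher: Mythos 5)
Your proposal is correct and follows essentially the same route as the paper's proof: the lower bound by embedding \ATL[\ir] via $\coop{\coalition}^{\geq 1}$ over Dirac-transition models, and the upper bound by guessing a memoryless deterministic uniform coalition strategy, collapsing the opponents into a single adversary to obtain an MDP, solving the resulting probabilistic reachability query in polynomial time, and recursing bottom-up for $\PTIME^{\NP}=\Deltwo$. The only cosmetic difference is that you discharge the MDP step directly via linear programming, where the paper cites the $\NP\cap\coNP$ bound for \PCTL model checking; both land in the same class.
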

\begin{proof}
The lower bound follows by a reduction of \ATL[\ir] model checking, which is \Deltwo-hard~\citep{Jamroga06atlir-eumas}.
Given are: a pointed CGS $(M,q)$ and a formula $\coop{\coalition}\varphi$ of \ATL[\ir]. Note that $M$ can be seen as stochastic CGS with only Dirac probability distributions for transitions.
Recall that, in finite games, the opponents always have a deterministic best-response strategy to any given strategy $\sigma_{\coalition}$.
Thus, $M,q \models_{\ATL[\ir]} \coop{\coalition}\varphi$ 
iff the agents in $\coalition$ have a uniform deterministic memoryless strategy to enforce $\varphi$ on all paths 
iff they have such a strategy against all the probabilistic responses from $\overline{\coalition}$.
Since the set of best responses includes deterministic strategies of $\overline{\coalition}$ played against deterministic strategy $\sigma_{\coalition}$ in the deterministic CGS $M$, this is equivalent to saying that $M,q \models_{\PATL[\ir]} \coop{\coalition}^{\ge 1}\varphi$, which completes the reduction.

For the upper bound, we apply a similar procedure to that of \ATL[\ir]~\citep{DBLP:journals/entcs/Schobbens04}. For formulas of type $\coop{\coalition}^{\bowtie \distribution} \varphi$ without nested strategic modalities, we guess a strategy $\sigma_{\coalition}$, prune the model accordingly, and merge the remaining agents ($\overline{\coalition}$) into a single opponent. This yields a single-agent Markov Decision Process with full observability.
Then, we check the Probabilistic Computation Tree Logic formula $A^{\bowtie \distribution} \varphi$, which can be done in time $\NP\cap\coNP$~\citep{chen2007probabilistic}.
\nocite{Hansson94logic}

For nested strategic modalities, we proceed recursively (bottom up), which runs in time $\PTIME^{\NP\cap\coNP}$ = \Deltwo.
\end{proof}

\extended{

\begin{theorem}
Model checking \PATLs[\ir] with deterministic strategies for the coalition is \PSPACE-complete.
\end{theorem}
\begin{proof}
The lower bound is immediate from the corresponding problem for ATL$^*$, which is also \PSPACE-complete.

As for the upper bound, we can apply the same procedure as for ATL$^*$ in the same setting: for formulas of type $\coop{\coalition}^{\bowtie \distribution} \varphi$, we guess a strategy and prune the model accordingly. Then, we check the PCTL$^*$ formula $A^{\bowtie \distribution} \varphi$.
This procedure gives an algorithm in \NPSPACE = \PSPACE.
\end{proof}

\section{Model Checking Probabilistic Play}

Model checking of \PATL[\ir] is related to synthesis of memoryless policies for POMDPs, which is known to be in \PSPACE, as well as \NP-hard and \emph{sum-of-square-roots}-hard~\citep{Vlassis12memorylessPOMDP}.
However, the two problems differ significantly in several ways, and cannot be easily reduced to one another:
\begin{itemize}
\item Policy synthesis for POMDPs addresses non-nested 1.5-player games with arbitrary rewards. It looks for single-agent strategies that maximize the agent's expected reward, averaged over all execution paths and future time points. Importantly, the reward decreases with each step by a given temporal discount that is strictly smaller than 1. No less importantly, the proponent is playing against a purely reactive stochastic environment.

\item \PATL[\ir] model checking admits nested strategic properties in games with arbitrarily many players. It seeks coalitional strategies that maximize the probability of enforcing a binary reachability/safety goal against all probabilistic behaviors of the opponents. No temporal discounting is considered.
\end{itemize}

Our proofs in the rest of this section have been inspired by the results and proofs of~\citep{Vlassis12memorylessPOMDP}.
Notice in particular that it is not possible to employ the technique that we used in Section~\ref{sec:mcheck-det} for deterministic strategies, i.e., calling an oracle that guesses the best memoryless strategy. This is because there are \emph{infinitely many} probabilistic memoryless strategies, and hence the oracle Turing machine would either have to run in unbounded time, or allow for infinite branching.
In fact, synthesis of optimal probabilistic strategies is a special case of \emph{jointly constrained bilinear optimization}, which is a notoriously hard problem~\citep{AlKhayyal90bilinear}.
Fortunately, our case can be reduced to deciding the second level in the \emph{hierarchical theory of the reals}~\citep{Schaefer22beyond-ExistThReals} which is an extension of the \emph{existential theory of the reals} problem~\citep{Canny88PSPACE}.

\subsection{Probabilistic Play: Upper Bounds}
\label{sec:mcheck-prob-upper}

We begin by showing that the model checking problem is decidable in \EXPTIME.
Moreover, for the special case of formulas that include only the grand coalition of agents, the problem is in \PSPACE, analogously to memoryless synthesis for POMDPs. We will use reductions to the following decision problems.

\begin{definition}[Existential theory of the reals (Th$\Reals_\exists$)]\label{def:exist-th-reals}
The problem decides the truth of a first-order formula
$\Phi \equiv \exists x_1 \dots \exists x_n\ P(x_1,\dots,x_n)$
where $x_i$ are interpreted over $\Reals$, and $P$ is a monotone Boolean function of atomic predicates of the form $f_i(x_1,\dots,x_n) \ge 0$ or $f_i(x_1,\dots,x_n) > 0$, with each $f_i$ being a polynomial with rational coefficients.
\end{definition}
\begin{theorem}[\cite{Canny88PSPACE}]\label{prop:exist-th-reals}
Th$\Reals_\exists$ is in \PSPACE.
\end{theorem}

\begin{definition}[First-order theory of the reals (Th$\Reals$)]\label{def:FO-th-reals}
Analogously to Definition~\ref{def:exist-th-reals}, only with an arbitrary sequence of quantifiers $Q_1\dots Q_n$ allowed at the beginning of $\Phi$.
\end{definition}
\begin{theorem}[\cite{Renegar92existThReals}]\label{prop:FO-th-reals}
There is an algorithm for Th$\Reals$ that requires
$(md)^{n\cdot 2^{O(\omega)}}$ operations
and $(md)^{O(n)}$ calls to an oracle computing $P$,
where $m$ is the number of atomic predicates in $\Phi$,
$d$ is the maximal degree of the polynomials,
$n$ is the number of quantifiers,
and $\omega-1$ the number of quantifier alternations in $\Phi$.

\WJ{In our case, we will have $m = n = O(|\States|), \omega = 2, d = 1$. This yields an algorithm with complexity $n^{n\cdot 2^{O(1)}} + n^{O(n)} = 2^{O(n\cdot \log n)}$. }
\end{theorem}

\begin{theorem}\label{prop:prob-exptime}
Model checking \PATL[\ir] with probabilistic strategies for the coalition is in \EXPTIME.
\end{theorem}
\begin{proof}
Proof idea, case $\System,\iplay \models \coop{\agent}^{\bowtie d}\prop{p_1}\Until\prop{p_2}$: \WJ{to be expanded}

\begin{enumerate}
\item Reconstruct the model as follows: (i) add a ``sink'' state $q_{sink}$ with a deterministic self-loop where \prop{p_2} is false; (ii) for all the states where \prop{p_2} is true or \prop{p_1} is false, remove all outgoing transitions and replace them with an automatic transition to $q_{sink}$. That is, we stop looking at the rest of the path whenever \prop{p_2} has been achieved (and thus $\prop{p_1}\Until\prop{p_2}$ already succeded) or \prop{p_1} has been invalidated (and thus $\prop{p_1}\Until\prop{p_2}$ already failed).

\item Formulate the set of constraints as equalities over the vector of (real-valued) rewards $r_\state$ for $\state\in\States$, expressing the expected probability of success from state $\state$, and probabilistic decisions $prob_{\state,\alpha}$ for $\state\in\States, \alpha\in\Act$, expressing the probability with which agent $\agent$ takes action $\alpha$ at state $\state$.

\item Decide whether there exists a valuation of the variables $r_\state$ and $prob_{\state,\alpha}$, $\state\in\States, \alpha\in\Act$ that satisfies the above constraints plus $r_{\iplay_0} \bowtie d$.

\item \WJ{this works when there are no opponents. What about the general case? We call an oracle?}
\end{enumerate}

For $\coop{\coalition}\prop{p_1}\Until\prop{p_2}$, proceed by calling oracles that guess the strategies of individual agents, one by one. Each oracle runs in \PSPACE, hence the overall procedure is in \PSPACE, too.

For nested strategic modalities, proceed recursively (bottom-up), obtaining $\Ptime^\PSPACE = \PSPACE$, qed.
\end{proof}

\begin{theorem}\label{prop:prob-pspace}
Model checking \PATL[\ir] with probabilistic strategies for the coalition is in \PSPACE.
\end{theorem}
\begin{proof}
\end{proof}

\subsection{Probabilistic Play: Lower Bounds}
\label{sec:mcheck-prob-lower}

} 

\section{Discussion}
This paper analyses the verification of the strategic abilities of autonomous agents in  MAS while accounting for both incomplete information and probabilistic behaviours of the environment and agents. 
The setting considered in this paper is significant as MAS are often set in partially observable environments, whose evolution might not be known with certainty, but can be measured based on experiments and past observations. 
Verification of strategic abilities in the general setting with perfect recall is known to be undecidable, but the restriction to memoryless strategies is meaningful. 
We provided complexity results for deterministic strategies for the proponent coalition and point out different settings that are currently challenging open questions, based on probabilistic strategies for the proponent coalition. 

For solving the model checking problem w.r.t.~probabilistic strategies for the proponent coalition, notice that it is not possible to exploit the technique used in Section~\ref{sec:mc} for deterministic strategies, i.e., calling an oracle that guesses the successful memoryless strategy. This is because there are \emph{infinitely many} probabilistic memoryless strategies, and hence the oracle Turing machine would either have to run in unbounded time, or allow for infinite branching.
In fact, the synthesis of optimal probabilistic strategies is a special case of \emph{jointly constrained bilinear optimization}, which is a notoriously hard problem~\citep{AlKhayyal90bilinear}.
Additionally, techniques employed for partially observable Markov decision processes (see for instance \citep{Vlassis12memorylessPOMDP}) can not be easily adapted as they refer to single-agent abilities.
Moreover, the work on Probabilistic Alternating $\mu$-calculus \citep{song2019probabilistic} seems unhelpful in our case. First, it is known that Probabilistic Alternating $\mu$-calculus and \PATL are incomparable \citep{bulling2011alternating,song2019probabilistic}. Second, the work \citep{song2019probabilistic} only considers perfect information strategies. 
Finally, using the work on \PSL\citep{aminof2019probabilistic} does not seem the right direction either. Indeed, it 
only 
considers perfect information strategies. Additionally, the model checking problem for \PSL is 3-EXPTIME-complete, while we expect a much lower complexity in our setting.   
\clearpage

\section*{Acknowledgments}
This research has been supported by the PRIN project RIPER (No. 20203FFYLK), the PNRR MUR project PE0000013-FAIR, the InDAM project ``Strategic Reasoning in Mechanism Design'', the EU ICT-48 2020 project TAILOR (No. 952215), the NCBR Poland/FNR Luxembourg projects STV (POLLUX-VII/1/2019 and C18/IS/12685695/IS/STV/Ryan), SpaceVote (POLLUX-XI/14/SpaceVote/2023 and C22/IS/17232062/SpaceVote) and PABLO (C21/IS/16326754/PABLO), as well as the EU H2020 Marie Sklodowska-Curie project with grant agreement No 101105549.

\bibliographystyle{kr}

\end{document}